\newcommand\version{August 19, 2018}
\newtheorem{theorem}{Theorem}
\newtheorem{proposition}[theorem]{Proposition}
\newtheorem{corollary}[theorem]{Corollary}
\theoremstyle{definition}
\theoremstyle{remark}
\newtheorem{remark}[theorem]{Remark}
\newcommand{\1}{\mathbbm{1}}
\newcommand{\const}{\mathrm{const}\ }
\renewcommand{\epsilon}{\varepsilon}
\newcommand{\loc}{{\rm loc}}
\newcommand{\N}{\mathbb{N}}
\renewcommand{\phi}{\varphi}
\newcommand{\R}{\mathbb{R}}
\newcommand{\Z}{\mathbb{Z}}
\DeclareMathOperator{\per}{Per}
\begin{document}

\title[A one-dimensional liquid drop model --- \version]{Periodic energy minimizers\\ for a one-dimensional liquid drop model}

\author{Rupert L. Frank}
\address[Rupert L. Frank]{Mathematisches Institut, Ludwig-Maximilans Universit\"at M\"unchen, Theresienstr. 39, 80333 M\"unchen, Germany, and Mathematics 253-37, Caltech, Pasa\-de\-na, CA 91125, USA}
\email{rlfrank@caltech.edu}

\author{Elliott H. Lieb}
\address[Elliott H. Lieb]{Departments of Mathematics and Physics, Princeton University, Washington Road, Princeton, NJ 08544, USA}
\email{lieb@princeton.edu}

\thanks{\copyright\, 2018 by the authors. This paper may be
reproduced, in its entirety, for non-commercial purposes.\\
U.S.~National Science Foundation grants DMS-1363432 (R.L.F.) and PHY-1265118 (E.H.L.) are acknowledged.}

\begin{abstract}
We reprove a result by Ren and Wei concerning the periodicity of minimizers of a one-dimensional liquid drop model in the neutral case. Our proof works for general boundary conditions and also in the non-neutral case.
\end{abstract}

\maketitle

\section{Introduction and main result}

In this paper we consider the energy functional
\begin{equation}
\label{eq:energy}
\mathcal I_\rho^{(L)}[E] :=
\per E - \frac{\gamma}{2} \int_{-L/2}^{L/2} \int_{-L/2}^{L/2} (\1_E(x)-\rho)|x-y|(\1_E(y)-\rho)\,dx\,dy
\end{equation}
defined on sets $E\subset[-L/2,L/2]$ and involving a parameter $\rho\in (0,1)$, as well as the corresponding ground state energy
\begin{equation}
\label{eq:gse}
e^{(L)}_\rho(\ell) := \inf\left\{ \mathcal I_\rho^{(L)}[E] :\ E\subset [-L/2,L/2],\, |E|=\ell\right\}.
\end{equation}
The constant $\gamma>0$ is fixed throughout this paper and will not be reflected in the notation. (In fact, by rescaling $E$ and $L$ we could set $\gamma=1$.) By $\per E$ we denote the perimeter of the set $E$ in the sense of geometric measure theory which, however, is elementary in this one-dimensional context. Namely, a bounded set $E\subset\R$ is of finite perimeter if and only if, up to sets of measure zero, there is an $N\in\N$ such that $E$ is the union of $N$ intervals whose closures are disjoint, and in this case $\per E = 2N$.

The minimization problem \eqref{eq:gse} arises in nuclear physics. As suggested originally in \cite{HSY,RPW} nuclear matter at extremely high densities, as for instance, in the crust of neutron stars, exhibits exotic phases, sometimes called `nuclear pasta phases'. The relevant parameter $\rho\in (0,1)$ describes the ratio between the charge density of a uniform background of electrons and that of the nuclei. For values of $\rho$ around $1/2$ it is believed that nuclear matter arranges itself in a slab-like structure which is periodic with respect to one direction. Within Gamow's liquid drop model \cite{G} this slab-like regime is described by the energy functional \eqref{eq:energy}.

The model \eqref{eq:energy}, however, is of interest also beyond this concrete physical problem. It is variant of a one-dimensional Coulomb problem. These are introduced as toy models which mimick some of the properties of the (much harder) three-dimensional Coulomb problem and have been studied, for instance, in \cite{L,B,K,BL,JJ}. One phenomenon which is of particular interest is the emergence of periodic structures. While a proof of this property still eludes us in the three-dimensional context, it has been shown to occur in several one-dimensional models; see, for instance, \cite{M,RW,RW2,CO,GLL} and references therein.

Remarkably, the minimization problem defining $e^{(L)}_\rho(\ell)$ can be solved explicitly. In the `neutral' case $\ell=\rho L$ this was shown in a different, but essentially equivalent formulation in the work \cite{RW} by Ren and Wei. We give a more quantitative and, we think, simpler proof of their solution. Moreover, we present several extensions which, we believe, are new. One of these concerns the study of the non-neutral case $\ell=\rho L+Q$ with an excess charge $Q\neq 0$. We show that this excess charge goes to the boundary and \emph{lowers} the energy per length (in the thermodynamic limit $L\to\infty$) by an amount of $\gamma Q^2/4$. This is in contrast to the three-dimensional case, where the excess charge \emph{raises} the energy per volume by an amount proportional to $Q^2$ \cite{LL}.

Another generalization concerns the Coulomb kernel $-\frac12|x-y|$ in \eqref{eq:energy}. This function coincides, up to an irrelevant additive constant, with the Neumann Green's function on the interval $(-L/2,L/2)$. (Because of this fact our result in the neutral case is equivalent to the Ren--Wei result.) In other occurrences of the above model, and also as a technical tool in certain proofs, it is natural to consider Green's functions on $(-L/2,L/2)$ with different boundary conditions, namely either periodic or Dirichlet boundary conditions. We show that, remarkably, the ground state energies for these various choices all coincide on any given interval. Moreover, the optimizing sets coincide up to translations.

We now proceed to a precise statement of our main results. We begin with the `neutral' case $\ell=\rho L$ considered previously in \cite{RW}. We consider the set
$$
E_{\rho,N,L} = \bigcup_{n=1}^N \left[ \frac{(2n-N-1-\rho)L}{2N},\frac{(2n-N-1+\rho)L}{2N}\right] \,.
$$
This is the union of $N$ intervals of length $L/N$ centered at the point $(2n-N-1)L/(2N)$, $n=1,\ldots,N$.

\begin{theorem}\label{minimization}
Let $\rho\in(0,1)$ and $L>0$. Then
\begin{align*}
e^{(L)}_\rho(\rho L) = L \min_{N \in \N} \left( 2(N/L) + \frac{\gamma}{12} \frac{\rho^2(1-\rho)^2}{(N/L)^2} \right).
\end{align*}
The minimum on the right side is attained by at least one and at most two $N\in\N$. Minimizing sets are exactly those of the form $E_{\rho,N,L}$ with a minimizing $N$. In particular, minimizing sets are periodic with minimal period $L/N$.
\end{theorem}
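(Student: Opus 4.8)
The plan is to recast the energy into a convex shape via one integration by parts, and then to reduce the fixed-$N$ minimization to an elementary inequality about sums of cubes. Write $u=\1_E-\rho$ and introduce the primitive $m(x)=\int_{-L/2}^x u\,dy$, so that $m'=\1_E-\rho$ and, crucially, the neutrality condition $|E|=\rho L$ forces $m(-L/2)=m(L/2)=0$. Since $-\tfrac12|x-y|$ is, up to a constant, the Neumann Green's function, I expect an integration by parts — in which the boundary terms vanish \emph{precisely because} $m(\pm L/2)=0$ — to turn the double integral in \eqref{eq:energy} into $\gamma\int_{-L/2}^{L/2}m(x)^2\,dx$. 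This should give
\[
\mathcal I_\rho^{(L)}[E]=2N+\gamma\int_{-L/2}^{L/2}m(x)^2\,dx,
\]
where $N=\tfrac12\per E$ is the number of intervals of $E$, and $m$ is continuous and piecewise linear with slope $1-\rho$ on the material intervals and slope $-\rho$ on the gaps.

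The heart of the argument is a telescoping identity. On a single linear piece along which $m$ runs from a value $a$ to a value $b$ over length $\ell=(b-a)/s$ with slope $s$, one has $\int m^2=\tfrac{\ell}{3}(a^2+ab+b^2)=\tfrac{b^3-a^3}{3s}$. Summing over all pieces, and using that $\sum(b^3-a^3)$ telescopes to $m(L/2)^3-m(-L/2)^3=0$, the gap contributions re-express through the material ones, and I expect the clean formula
\[
\int_{-L/2}^{L/2}m^2\,dx=\frac{1}{3\rho(1-\rho)}\sum_{k=1}^N\left(\beta_k^3-\alpha_k^3\right),
\]
where $\alpha_k<\beta_k$ are the values of $m$ at the left and right endpoints of the $k$-th material interval. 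The elegant feature is that this holds for \emph{any} arrangement, with no special treatment of whether $E$ meets the endpoints $\pm L/2$.

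For the lower bound at fixed $N$, the only surviving constraint is that the total material length equal $\rho L$, i.e. $\sum_k(\beta_k-\alpha_k)=\rho(1-\rho)L=:S$. I would then \emph{relax}, treating the $\alpha_k,\beta_k$ as free reals subject only to this one linear constraint; continuity of $m$ and ordering of the gaps only impose further constraints, so this cannot raise the infimum. For fixed $\delta_k=\beta_k-\alpha_k$, the quantity $\beta_k^3-\alpha_k^3$ is minimized by the symmetric choice $\alpha_k=-\delta_k/2,\ \beta_k=\delta_k/2$, giving $\delta_k^3/4$; then convexity of $t\mapsto t^3$ forces all $\delta_k=S/N$. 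This yields $\sum(\beta_k^3-\alpha_k^3)\ge S^3/(4N^2)$ and hence $\int m^2\ge \rho^2(1-\rho)^2L^3/(12N^2)$, exactly the target. Since the relaxed minimizer (all $\alpha_k=-A$, $\beta_k=A$ with $A=S/(2N)$) is realizable, it is the genuine minimizer, and strictness of both inequalities off the optimum identifies the unique minimizer for each $N$ as $E_{\rho,N,L}$. Finally $e_\rho^{(L)}(\rho L)=\min_{N\in\N} g(N)$ with $g(N)=2N+\gamma\rho^2(1-\rho)^2L^3/(12N^2)$; as $g$ extends to a strictly convex function on $(0,\infty)$, its minimum over $\N$ is attained at one integer, or at two consecutive integers in the tie case, and periodicity is evident from the explicit form of $E_{\rho,N,L}$.

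The main obstacle is conceptual rather than computational: the admissible class — piecewise-linear $m$ with exactly two prescribed slopes and exactly $N$ up-runs — is strongly non-convex, so a direct variational attack is awkward. The decisive move is the telescoping identity, which collapses $\int m^2$ onto the $2N$ endpoint values alone; thereafter the non-convex constraints can be dropped safely, because the resulting relaxed convex problem attains its unique optimum at a feasible point. The one place demanding care is the vanishing of the boundary terms in the integration by parts, which is exactly where neutrality enters and is what confines this particular argument to the case $\ell=\rho L$.
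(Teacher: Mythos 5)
Your proof is correct, but it takes a genuinely different route from the paper's. You work through the Ren--Wei--style reformulation: neutrality forces $m(\pm L/2)=0$ for the primitive $m(x)=\int_{-L/2}^x(\1_E-\rho)\,dy$, and the energy becomes $2N+\gamma\int_{-L/2}^{L/2}m^2\,dx$. This identity is true (it can be checked cleanly by writing $|x-y|=\int_\R(\1_{\{t<x\}}-\1_{\{t<y\}})^2\,dt$ and using $\int(\1_E-\rho)\,dx=0$), and it is exactly the ``different, but essentially equivalent'' formulation of \cite{RW} that the paper alludes to. Your telescoping identity $\int m^2=\frac{1}{3\rho(1-\rho)}\sum_k(\beta_k^3-\alpha_k^3)$ then replaces the paper's machinery: the paper instead computes the double integral directly in terms of the interval centers $x_n$ and lengths $q_n$, completes the square in the centers (formula \eqref{eq:formula}), and uses the cubic identity $\sum_n q_n(\sum_{m<n}q_m-\sum_{m>n}q_m)^2+\frac13\sum_n q_n^3=\frac13(\sum_n q_n)^3$ together with H\"older. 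Your two convexity steps (the shift bound $\beta_k^3-\alpha_k^3\ge\delta_k^3/4$ and the power--mean bound $\sum_k\delta_k^3\ge N^{-2}(\sum_k\delta_k)^3$) play precisely the roles of the paper's dropped square term and H\"older step, with matching equality cases; since $E$ is reconstructible from the ordered data $(\alpha_k,\beta_k)$ together with $m(-L/2)=0$ and the two slopes, your uniqueness conclusion is sound, and the final optimization over $N$ by strict convexity is the same in both arguments. What the paper's approach buys is generality: its key lemma (Proposition \ref{bound}) keeps the background term $\rho\int_E x^2\,dx$ explicit and therefore needs no neutrality, which is what later powers the non-neutral case (Corollary \ref{excess}) and, via Proposition \ref{boundgen}, the Dirichlet and periodic cases (Theorem \ref{minimizationgen}); your argument, as you acknowledge, is locked to $\ell=\rho L$, but within that case it is shorter and arguably more transparent. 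Two points to tighten in a final write-up: verify the integration by parts (boundary terms and all) rather than saying you ``expect'' it, and phrase the relaxation as retaining the constraints $\delta_k=\beta_k-\alpha_k>0$ --- if the $\alpha_k,\beta_k$ were literally free subject only to $\sum_k(\beta_k-\alpha_k)=S$, the infimum would be $-\infty$; since the $\delta_k$ are positive by construction this is only a matter of wording, but it should be said.
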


Strictly speaking, this is not exactly the result from \cite{RW}. They consider the energy functional \eqref{eq:energy} with $\per E$ replaced by the relative perimeter $\per (E,(-L/2,L/2))$, where boundaries of $E$ coinciding with one of the points $\pm L/2$ are not counted. This has the effect that their functional has twice as many minimizers.

Having Theorem \ref{minimization} it is easy to compute the thermodynamic limit.

\begin{corollary}\label{tdlim}
Let $\rho\in(0,1)$. Then
\begin{align*}
& \lim_{L\to\infty} \frac{e^{(L)}_\rho(\rho L)}{L} = \left( \frac{3}{2} \right)^{2/3} \gamma^{1/3} \rho^{2/3}(1-\rho)^{2/3} \,.
\end{align*}
Moreover, the set of limit points in $L^1_\loc(\R)$ of minimizers consists of the two sets
$$
\sum_{n\in\Z} \left[\beta \left(n -\frac\rho2\right),\beta\left(n+\frac\rho2\right) \right]
\qquad\text{and}\qquad
\sum_{n\in\Z} \left[\beta \left(n+\frac{1-\rho}2\right),\beta\left(n+\frac{1+\rho}2\right)\right]
$$
with $\beta = 2^{2/3} 3^{1/3} \gamma^{-1/3} (\rho(1-\rho))^{-2/3}$.
\end{corollary}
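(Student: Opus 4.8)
The plan is to deduce everything from the closed form in Theorem~\ref{minimization}. Writing $t=N/L$ and
\[
f(t):=2t+\frac{\gamma}{12}\,\frac{\rho^2(1-\rho)^2}{t^2},\qquad t>0,
\]
the theorem reads $e^{(L)}_\rho(\rho L)/L=\min_{N\in\N}f(N/L)$. The function $f$ is strictly convex on $(0,\infty)$ and tends to $+\infty$ at both endpoints, so it has a unique minimizer $t_*$; solving $f'(t_*)=0$ gives $t_*^3=\tfrac{\gamma}{12}\rho^2(1-\rho)^2$, whence $f(t_*)=3t_*=(3/2)^{2/3}\gamma^{1/3}\rho^{2/3}(1-\rho)^{2/3}$, the asserted limit. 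To justify the passage to the limit I note $\min_N f(N/L)\ge f(t_*)$ for every $L$, while choosing $N=\lfloor t_* L\rfloor$ and using continuity of $f$ gives $\limsup_{L\to\infty}\min_N f(N/L)\le f(t_*)$. One also records the identity $1/t_*=\beta$.

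For the minimizers, the first step is to show that any minimizing $N=N(L)$ satisfies $N/L\to t_*$: since $f(N/L)\to f(t_*)=\min f$ and $f$ is coercive with a unique minimizer, $N/L$ stays in a compact subset of $(0,\infty)$ and must converge to $t_*$. Consequently the period $L/N\to 1/t_*=\beta$ and the common interval length $\rho L/N\to\rho\beta$. The decisive remaining point is the \emph{phase}: the centers of the intervals of $E_{\rho,N,L}$ are $c_n=(n-\tfrac{N+1}2)\,\tfrac LN$, so for odd $N$ one interval is centered at the origin and the centers near $0$ are $\{j\,(L/N)\}$, whereas for even $N$ the origin lies midway between two intervals and the centers near $0$ are $\{(j+\tfrac12)(L/N)\}$.

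Now I extract subsequences. Given any sequence $L_k\to\infty$ with minimizers $E_{\rho,N_k,L_k}$, I pass to a subsequence along which $N_k$ has constant parity. Because only finitely many intervals meet a fixed compact set and their endpoints converge (period $\to\beta$, length $\to\rho\beta$, centers $\to\{j\beta\}$ for odd $N_k$ and $\to\{(j+\tfrac12)\beta\}$ for even $N_k$), the indicators converge in $L^1_\loc(\R)$ to
\[
\sum_{n\in\Z}\left[\beta\left(n-\tfrac\rho2\right),\beta\left(n+\tfrac\rho2\right)\right]
\qquad\text{and}\qquad
\sum_{n\in\Z}\left[\beta\left(n+\tfrac{1-\rho}2\right),\beta\left(n+\tfrac{1+\rho}2\right)\right],
\]
respectively. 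This shows every limit point is one of the two listed sets. Conversely, both are attained: taking $L=K/t_*$ for an integer $K$ makes $N=K$ the unique minimizer (then $K/L=t_*$ realizes the global minimum of $f$), with period exactly $\beta$ and the intervals positioned exactly as above; letting $K\to\infty$ through even, respectively odd, integers produces each of the two sets as an $L^1_\loc$ limit.

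I expect the energy computation to be routine; the main obstacle is the bookkeeping for the minimizers — specifically, recognizing that the limit is governed solely by the parity of $N$ (the phase of the periodic pattern) and verifying that both parities are realized for arbitrarily large $L$, so that the set of limit points is \emph{exactly} these two sets and not a single one.
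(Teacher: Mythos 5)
Your proof is correct and follows essentially the same route as the paper: both reduce the energy asymptotics to minimizing the one-variable function supplied by Theorem \ref{minimization} (the paper rescales it to the normalized $f(x)=x+\tfrac{1}{3x^2}$ with minimum $(3/2)^{2/3}$ at $x=(2/3)^{1/3}$, while you locate the critical point $t_*$ of the unscaled function directly; the computations agree, including $1/t_*=\beta$). Your treatment of the second assertion --- showing $N(L)/L\to t_*$ for any minimizing $N(L)$, observing that the parity of $N$ fixes the phase of $E_{\rho,N,L}$ (a center at the origin for odd $N$, a gap centered at the origin for even $N$), and realizing both parities via $L=K/t_*$ with $K$ odd or even --- is accurate and in fact more explicit than the paper, whose written proof covers only the energy limit and leaves the identification of the two $L^1_\loc$ limit points implicit.
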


In fact, we prove that Theorem \ref{minimization} implies the uniform bound
\begin{equation}
\label{eq:tduniform}
\frac{e^{(L)}_\rho(\rho L)}{L} \geq \left( \frac{3}{2} \right)^{2/3} \gamma^{1/3} \rho^{2/3}(1-\rho)^{2/3}
\qquad\text{for all}\ L>0 \,,
\end{equation}
as well as the remainder bound
\begin{equation}
\label{eq:tdrem}
\frac{e^{(L)}_\rho(\rho L)}{L} = \left( \frac{3}{2} \right)^{2/3} \gamma^{1/3}
 \rho^{2/3}(1-\rho)^{2/3} + O(L^{-2})
 \qquad\text{as}\ L\to\infty \,.
\end{equation}
It is remarkable that the remainder here is $O(L^{-2})$ and not $O(L^{-1})$. We also show that the error bound $O(L^{-2})$ cannot be improved.

It is also remarkable that the energy in the thermodynamic limit does not behave linearly as $\rho\to 0$ or $\rho\to 1$. This reflects the fact that the minimization problem $\per E - (\gamma/2) \iint_{E\times E}|x-y|\,dx\,dy$ over sets $E\subset\R$ with fixed $|E|$ yields $-\infty$. In contrast, in the three-dimensional case, where the corresponding whole space problem does have a minimizer, the analogous energy in the thermodynamic limit can be shown to behave linearly as $\rho\to 0$ with a coefficient depending on the whole space problem \cite{EFK}.

Next, we comment on the non-neutral case. Since the explicit solution is somewhat complicated to state, we content ourselves with the statement in the thermodynamic limit.

\begin{corollary}\label{excess}
Let $\rho\in(0,1)$, $L>0$ and $Q\in\R$. Then
$$
\lim_{L\to\infty} L^{-1} e^{(L)}_\rho(\rho L + Q) = \left( \frac{3}{2} \right)^{2/3} \gamma^{1/3}
 \rho^{2/3}(1-\rho)^{2/3} - \frac{1}{4}\gamma Q^2 \,.
$$
\end{corollary}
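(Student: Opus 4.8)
The plan is to reduce the non-neutral problem to the neutral one through an exact identity that isolates the term $-\gamma Q^2/4$. Writing $f := \1_E - \rho$ and $F(x) := \int_{-L/2}^x f$, so that $F(-L/2) = 0$ and $F(L/2) = |E| - \rho L = Q$, a double integration by parts in the double integral of \eqref{eq:energy} gives $\iint f(x)|x-y|f(y)\,dx\,dy = 2Q\int_{-L/2}^{L/2} F - 2\int_{-L/2}^{L/2}F^2$, and after completing the square one obtains
\[
\mathcal I_\rho^{(L)}[E] = \per E + \gamma \int_{-L/2}^{L/2} \Bigl( F(x) - \tfrac{Q}{2} \Bigr)^2 dx - \frac{\gamma Q^2 L}{4} .
\]
For $Q = 0$ this is precisely the neutral representation $\per E + \gamma\int F^2$ underlying Theorem \ref{minimization}. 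Setting $c_\rho := (3/2)^{2/3}\gamma^{1/3}\rho^{2/3}(1-\rho)^{2/3}$, it therefore suffices to prove
\[
\min\Bigl\{ \per E + \gamma \int_{-L/2}^{L/2} \bigl( F - \tfrac{Q}{2} \bigr)^2 : E\subset[-L/2,L/2],\ |E| = \rho L + Q \Bigr\} = c_\rho L + o(L) ,
\]
since dividing the identity by $L$ and letting $L\to\infty$ then yields the corollary. With $G := F - Q/2$ we have $G' = \1_E - \rho$ and $G(\mp L/2) = \mp Q/2$, so this reduced problem is exactly the neutral one for the primitive $G$, but with nonzero boundary values.

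For the upper bound I would exhibit the competitor suggested by the physical picture that the excess charge concentrates at the two ends. Assume $Q > 0$; the case $Q < 0$ is symmetric, with empty blocks in place of full ones. Let $E$ consist of two fully occupied end-blocks $[-L/2,-L/2+a]$ and $[L/2-a,L/2]$ with $a := (Q/2)/(1-\rho)$, together with a near-optimal neutral configuration on the bulk interval $[-L/2+a,L/2-a]$, which exists by Corollary \ref{tdlim}. Each block carries charge $(1-\rho)a = Q/2$, so the total charge is $Q$ and the bulk is neutral; moreover $G$ climbs from $-Q/2$ to $0$ across the left block, oscillates around $0$ with $O(1)$ amplitude through the bulk, and climbs from $0$ to $Q/2$ across the right block. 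Consequently $\per E$ and $\int G^2$ each agree with their bulk contributions up to $O(1)$, and the bulk contribution equals the neutral ground state energy $e^{(L-2a)}_\rho(\rho(L-2a)) = c_\rho L + o(L)$ by Corollary \ref{tdlim}. This gives the upper bound in the displayed minimization.

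The lower bound is where the real work lies and is the step I expect to be the main obstacle. The clean statement to aim for is a boundary-corrected form of the uniform estimate \eqref{eq:tduniform}: for every $E\subset[-L/2,L/2]$ and every primitive $G$ with $G' = \1_E - \rho$,
\[
\per E + \gamma\int_{-L/2}^{L/2} G^2 \geq c_\rho L - K ,
\]
where $K$ depends on the boundary values $G(\pm L/2)$, hence on $Q$, but not on $L$. Applied with $G = F - Q/2$ this yields the matching lower bound $c_\rho L - O(1)$ and completes the proof. To establish the displayed estimate I would localize along the zeros of $G$: on each maximal sub-interval $[x_1,x_2]$ with $G(x_1) = G(x_2) = 0$, the function $G$ is the neutral primitive of $E\cap(x_1,x_2)$, so \eqref{eq:tduniform} contributes at least $c_\rho(x_2-x_1)$, and summing recovers $c_\rho$ times the total length of such intervals. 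The difficulty is confined to the two end regions, on which $G$ runs from $\pm Q/2$ to its first, respectively last, zero without vanishing in between, so that \eqref{eq:tduniform} does not apply directly. One must show these regions still contribute at least $c_\rho$ times their length, minus an $L$-independent constant. The mechanism is that the energy is essentially additive over the droplet--gap cells arising in the proof of \eqref{eq:tduniform}, and only $O(1)$ such cells are spent in driving $G$ from a zero up to the boundary value $\pm Q/2$, whose magnitude is $O(1)$; making this cell-by-cell accounting rigorous and uniform over all configurations, including those whose first zero lies far from the endpoint, is the crux.
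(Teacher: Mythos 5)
Your starting identity is correct (a double integration by parts indeed gives $-\tfrac12\iint f|x-y|f\,dx\,dy=\int F^2-Q\int F$ for $f=\1_E-\rho$, hence $\mathcal I_\rho^{(L)}[E]=\per E+\gamma\int_{-L/2}^{L/2}(F-Q/2)^2\,dx-\gamma Q^2L/4$), and your upper bound is a complete, valid argument: the two saturated end blocks of length $(Q/2)/(1-\rho)$ enclosing a near-optimal neutral bulk cost only $O(1)$ beyond $e^{(L-2a)}_\rho(\rho(L-2a))$, which Corollary \ref{tdlim} controls. The genuine gap is exactly where you say it is: the lower bound. Your proposed localization along the zeros of $G=F-Q/2$ handles the middle piece via \eqref{eq:tduniform}, but the estimate you need for the two end regions --- that a stretch on which $G$ stays one-signed contributes at least $c_\rho:=(3/2)^{2/3}\gamma^{1/3}\rho^{2/3}(1-\rho)^{2/3}$ times its length, minus an $L$-independent constant --- is asserted, not proved, and it is not a routine step. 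The end region can have length of order $L$ (nothing forces the first zero of $G$ to be near the endpoint), and the natural cell-by-cell accounting (slice into bands of width $\epsilon$, count slope changes per excursion) loses an additive constant per cell while the number of cells grows linearly with the length of the region; these losses are therefore not $O(1)$, and recovering the sharp constant $c_\rho$ requires a real argument. (The claim itself is true --- one-signed oscillation costs strictly more per unit length than symmetric oscillation, $3^{2/3}$ versus $(3/2)^{2/3}$ in the natural units --- but your sketch does not establish it.)

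The frustrating part is that the missing lower bound is available for free from the paper's own machinery, which your proposal never invokes: Corollary \ref{background} applied with $|E|=\rho L+Q$ and $E$ a union of at most $N$ intervals gives
\begin{align*}
\mathcal I^{(L)}_\rho[E] \;\geq\; 2N + \frac{\gamma}{12\rho}\,(\rho L+Q)^3\,\frac{(1-\rho)^2}{N^2} \;-\;\frac{\gamma}{12\rho}\left(3\rho LQ^2+Q^3\right),
\end{align*}
because after inserting $|E|=\rho L+Q$ the background terms cancel all $Q$-independent cubic contributions; dividing by $L$, minimizing over $N$ and letting $L\to\infty$ yields precisely $c_\rho-\gamma Q^2/4$. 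Equivalently, combining this with your exact identity proves your claimed estimate $\per E+\gamma\int G^2\geq c_\rho L-K(Q)$ in one line, with no localization at all. This is how the paper argues (its upper bound is different from yours: exact saturation of the bound via the equality cases of Corollary \ref{background} when $Q$ is small, and the complementation $E\mapsto(-L/2,L/2)\setminus E$, $\rho\mapsto1-\rho$, $Q\mapsto-Q$ for large $Q$; your end-block construction is a legitimate and more physically transparent alternative). As written, however, your proof is incomplete at its decisive step.
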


Thus, non-neutrality lowers the energy per length. We refer to the proof for a description of minimizing sets.

So far, we have considered the problem where the sets interact through the whole space Green's function $-|x-y|/2$. As a final topic, we consider various choices of Green's functions corresponding to different boundary conditions, namely,
\begin{align}
\label{eq:kernels}
& -\frac12 |x-y| + \const & \text{Neumann case} \notag \\
& -\frac12|x-y| - \frac1L xy + \frac14 L & \text{Dirichlet case}\\
& -\frac12|x-y| - \frac1L xy + \const & \text{periodic case} \,. \notag
\end{align}
The constants in the Neumann and in the periodic case are chosen such that the integral of the kernel vanishes (with respect to $x$ for every $y$). Since we apply this kernel only to functions with integral zero, the value of these constants is irrelevant for us.

We denote by $k$ any one of these three kernels and consider the energy functional
$$
\tilde{\mathcal I}_\rho^{(L)}[E] :=
\per E - \gamma \int_{-L/2}^{L/2} \int_{-L/2}^{L/2} (\1_E(x)-\rho)k(x,y)(\1_E(y)-\rho)\,dx\,dy
$$
and the minimization problem
$$
\tilde e^{(L)}_\rho(\ell) := \inf\left\{ \tilde{\mathcal I}_\rho^{(L)}[E] :\ 
E\subset[-L/2,L/2],\ |E|=\ell \right\}.
$$
In the periodic case we agree to interpret $\per E$ as the perimeter of $E$ considered as a subset of $\R/L\Z$ and drop the constraint $E\subset[-L/2,L/2]$, interpreting the double integral as an integral over $(\R/L\Z)\times(\R/L\Z)$.

\begin{theorem}\label{minimizationgen}
Let $k$ be one of the kernels in \eqref{eq:kernels}. Then for any $\rho\in(0,1)$ and $L>0$,
\begin{align}
\label{eq:minimizationgen}
\tilde e^{(L)}_\rho(\rho L) = L \min_{N \in \N} \left( 2(N/L) + \frac{\gamma}{12} \frac{\rho^2(1-\rho)^2}{(N/L)^2} \right) \,.
\end{align}
Moreover, equality holds if and only if
\begin{enumerate}
\item[(1)] in the Neumann case, $E=E_{\rho,N,L}$,
\item[(2)] in the Dirichlet case, $E=E_{\rho,N,L}+a$ for $a\in [-(1+\rho)L/(2N),(1+\rho)L/(2N)]$,
\item[(3)] in the periodic case, $E=E_{\rho,N,L}+a$ for $a\in\R$,
\end{enumerate}
where, in all cases, $N$ is optimal for the minimum on the right side in \eqref{eq:minimizationgen}.
\end{theorem}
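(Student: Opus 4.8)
My plan is to reduce all three problems to the Neumann case of Theorem~\ref{minimization} by passing to the primitive of the charge density. For $E\subset[-L/2,L/2]$ with $|E|=\rho L$ write $f=\1_E-\rho$ and $F(x)=\int_{-L/2}^x f\,dy$; neutrality gives $F(-L/2)=F(L/2)=0$ and $\int_E x\,dx=-\int_{-L/2}^{L/2}F\,dx$. Each kernel in \eqref{eq:kernels} is the Green's function of $-d^2/dx^2$ for the corresponding boundary condition, and integrating the interaction by parts (the boundary terms vanishing because $\int f=0$) turns it into a Dirichlet form: with $\bar F:=L^{-1}\int_{-L/2}^{L/2}F\,dx$ one finds
\begin{align*}
\tilde{\mathcal I}_\rho^{(L)}[E] = \per E + \gamma\int_{-L/2}^{L/2}\bigl(F(x)-c_k\bigr)^2\,dx ,
\end{align*}
where $c_k=0$ for the Neumann kernel and $c_k=\bar F$ for the Dirichlet and periodic kernels---the extra term $-\frac1L xy$ present in the latter two being exactly what subtracts off the mean of $F$. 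In the Neumann case this is the functional $\mathcal I_\rho^{(L)}$ itself, so Theorem~\ref{minimization} identifies the right side of \eqref{eq:minimizationgen} together with the minimizers $E_{\rho,N,L}$, proving~(1).

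Write $\tilde e_{\mathrm N},\tilde e_{\mathrm D},\tilde e_{\mathrm{per}}$ for the three infima $\tilde e_\rho^{(L)}(\rho L)$. Since $\int(F-\bar F)^2\le\int F^2$, and since the perimeter on $\R/L\Z$ never exceeds the perimeter on $[-L/2,L/2]$ (identifying $\pm L/2$ can only merge boundary pieces), testing a common $E$ gives $\tilde e_{\mathrm{per}}\le\tilde e_{\mathrm D}\le\tilde e_{\mathrm N}$. The crux is the reverse bound $\tilde e_{\mathrm{per}}\ge\tilde e_{\mathrm N}$, which I would obtain by \emph{cutting the torus}. Given a periodic competitor $E$ with periodic primitive $F$, the function $F$ is continuous, non-constant and $L$-periodic, so it meets its mean $\bar F$ on a descending arc, i.e.\ at a point $x_0$ lying in the interior of a gap of $E$. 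Opening $\R/L\Z$ at $x_0$ yields a set $\tilde E\subset[-L/2,L/2]$ of measure $\rho L$ whose primitive $\tilde F$ equals $F-\bar F$; no interval of $E$ straddles the cut, so $\per\tilde E$ equals the periodic perimeter of $E$, while $\int\tilde F^2=\int(F-\bar F)^2$. Hence $\tilde{\mathcal I}_{\rho,\mathrm{per}}^{(L)}[E]=\mathcal I_\rho^{(L)}[\tilde E]\ge\tilde e_{\mathrm N}$, and all three energies equal the right side of \eqref{eq:minimizationgen}.

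For the minimizers, the periodic statement~(3) follows from the exact invariance of the periodic functional under rotations of $\R/L\Z$. For the Dirichlet statement~(2) I would reverse the cutting construction: equality forces the opened set to be a Neumann minimizer, hence $E$ must be a torus-translate $E_{\rho,N,L}+a$ for an optimal $N$, and conversely each such translate that can be reinserted into $[-L/2,L/2]$ without changing the perimeter is a minimizer. I expect the main obstacle to lie here, in two places: first, in the lower bound, in guaranteeing that the value $\bar F$ is attained at a point lying in a gap of $E$, so that opening the torus there introduces no spurious boundary; and second, in pinning down the precise admissible interval for $a$, which requires a careful account of the perimeter of configurations whose outermost interval touches or is split by the wall $\pm L/2$. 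It is exactly this boundary bookkeeping---sensitive to the convention adopted for the perimeter at $\pm L/2$---that produces the stated endpoints $\pm(1+\rho)L/(2N)$.
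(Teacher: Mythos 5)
Your route is genuinely different from the paper's and, for the energy identity \eqref{eq:minimizationgen}, it is correct. The paper never passes to primitives: it observes that for $|E|=\rho L$ the Dirichlet/periodic functional equals the Neumann one minus $\frac{\gamma\rho}{|E|}\bigl(\int_E x\,dx\bigr)^2$, and then proves a translation-invariant version of the main inequality (Proposition \ref{boundgen}, deduced from Proposition \ref{bound} by recentering $E$ at its barycenter), which yields the value and the equality cases in one stroke. Your argument --- rewriting all three interactions as $\gamma\int(F-c_k)^2$, getting $\tilde e_{\mathrm{per}}\le\tilde e_{\mathrm D}\le\tilde e_{\mathrm N}$ by soft monotonicity, and closing the circle with the torus-cutting lemma --- is a genuine alternative: it never revisits the interval computation and makes the coincidence of the three ground-state energies transparent. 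The cutting step itself is sound: at a down-crossing of $\bar F$ the slope of $F$ immediately to the right is $-\rho$, so the crossing point lies in the closure of a gap with the gap on the descending side, and cutting there preserves the perimeter (even if the point is the right endpoint of an interval of $E$, since the paper's perimeter counts endpoints at $\pm L/2$) while turning the periodic interaction exactly into the Neumann one.

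The gaps are in the equality statements. First, rotation invariance alone does not prove (3): it shows the set of periodic minimizers is rotation invariant, not that every minimizer is a rotate of $E_{\rho,N,L}$; for the ``only'' direction you need your own cutting argument (any periodic minimizer cuts to a Neumann minimizer, hence equals $E_{\rho,N,L}+a$ as a torus set). This is available to you, just not invoked. Second, and more seriously, you defer the Dirichlet bookkeeping and assert that it produces the stated endpoints $\pm(1+\rho)L/(2N)$. Carried out, it does not. Equality in the chain ``Dirichlet energy of $E$ $\ge$ periodic energy of $E$ viewed on the torus $\ge\tilde e_{\mathrm{per}}$'' forces (i) $E$, as a torus set, to be a translate $E_{\rho,N,L}+a$, and (ii) the line perimeter of $E$ to equal its torus perimeter. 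A translate straddling the wall has $N+1$ components on the line, hence perimeter $2N+2$ under the paper's convention, so it is excluded; and $E_{\rho,N,L}+a\subset[-L/2,L/2]$ holds iff $|a|\le(1-\rho)L/(2N)$. Thus the Dirichlet minimizers are exactly the translates with $|a|\le(1-\rho)L/(2N)$. The paper's own proof (Proposition \ref{boundgen} together with the admissibility check as in Corollary \ref{background}) gives the same bound $(1-\rho)L/(2N)$, so the $(1+\rho)$ in the statement of the theorem appears to be a typo. In other words, the one place where you trusted the printed statement instead of completing your own construction is precisely where the statement is off; your method, finished honestly, would have corrected it rather than reproduced it.
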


The results in the Dirichlet and in the periodic case seem to be new. Non-sharp bounds in the periodic case have been obtained in \cite{O}.

The structure of this paper is as follows. Section \ref{sec:inequality} contains the main inequality on which our argument hinges and we use it to derive Theorem \ref{minimization}. In Section \ref{sec:boundaryconditions} we discuss different boundary conditions and prove Theorem \ref{minimizationgen}. Finally, in Section \ref{sec:tdlimit} we discuss the thermodynamic limit proving Corollary \ref{tdlim}, the bounds stated thereafter and Corollary \ref{excess}.


\section{The main inequality}\label{sec:inequality}

The key ingredient in the proof of Theorem \ref{minimization} is the following lower bound.

\begin{proposition}\label{bound}
\begin{enumerate}
\item
Let $\rho\in(0,1)$ and $N\in\N$. For any set $E\subset\R$ which is the union of at most $N$ intervals one has
$$
-\frac12 \iint_{E\times E} |x-y|\,dx\,dy + \rho \int_E x^2\,dx \geq - \frac{1}{12\rho} |E|^3 \left( 1- \frac{(1-\rho)^2}{N^2} \right) \,.
$$
Equality holds if and only if $E$ is the union of exactly $N$ intervals, centered at the points $\frac{(2n-N-1)|E|}{2\rho N}$, $n=1,\ldots,N$, and all of equal length.
\item
Let $\rho\geq 1$. For any set $E\subset\R$ one has
$$
-\frac12 \iint_{E\times E} |x-y|\,dx\,dy + \rho \int_E x^2\,dx \geq \frac{\rho-2}{12} |E|^3 \,.
$$
Equality holds if and only if $E$ is an interval centered at the origin.
\end{enumerate}
\end{proposition}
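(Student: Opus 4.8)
The goal is to bound the functional
$$
F(E) := -\tfrac12 \iint_{E\times E} |x-y|\,dx\,dy + \rho \int_E x^2\,dx
$$
from below when $E$ is a union of at most $N$ intervals. The plan is to first reduce the double integral to a single integral and then exploit convexity.

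The key observation is that the quadratic form $-\tfrac12\iint |x-y|$ can be rewritten in terms of the indefinite integral of $\1_E$. Setting $u(x) = \int_{-\infty}^x \1_E(t)\,dt$ (so $u$ is the antiderivative of the indicator), an integration by parts (using $-\tfrac12|x-y|'' = \delta$ in the distributional sense, i.e. $-\tfrac12|x-y|$ is the Green's function of $-d^2/dx^2$) should turn the nonlocal term into a local one. Concretely, I expect
$$
-\frac12 \iint_{E\times E} |x-y|\,dx\,dy = -\int_\R \left( u(x) - c \right)^2 dx + (\text{boundary/constant terms})
$$
after suitable bookkeeping, where the constant $c$ and the linear terms are pinned down by $\int_E x^2\,dx$ and the constraint on the number of intervals. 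Combining this with the $\rho\int_E x^2$ term, the whole functional $F(E)$ should reduce to a one-dimensional variational problem for the profile $u$, which is piecewise linear with slope $1$ on $E$ and slope $0$ off $E$.

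Once $F$ is expressed as a functional of $u$ (equivalently, of the interval lengths and center positions), I would optimize. First, for fixed interval lengths, I would find the optimal centers; I expect these to be forced to the equally spaced configuration $\frac{(2n-N-1)|E|}{2\rho N}$ by a convexity/completing-the-square argument in the center variables. Then, with centers fixed optimally, I would show that the intervals must all have equal length, again by convexity (the functional should be strictly convex in the length variables, or amenable to a rearrangement/symmetrization argument), which forces the unique equal-length minimizer. A scaling check on $|E|$ will produce the overall $|E|^3$ homogeneity and the explicit constant $-\frac{1}{12\rho}(1 - (1-\rho)^2/N^2)$. For part (2), where $\rho\geq 1$ and no restriction on the number of intervals is imposed, the same local reformulation applies, but now there is no benefit to splitting $E$, so a single interval centered at the origin is optimal; here the sign of the effective coefficient changes, yielding the constant $\frac{\rho-2}{12}$.

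The main obstacle I anticipate is twofold. First, the integration-by-parts step requires careful treatment of boundary terms since $E$ is bounded but $u$ does not vanish at infinity, and one must correctly absorb the constants so that the reformulated functional is genuinely local and convex; getting the precise constant (the factor involving $(1-\rho)^2/N^2$) right depends on this bookkeeping. Second, establishing that \emph{equal} interval lengths and the specific \emph{equally spaced} centers are simultaneously optimal — and that this is the \emph{unique} equality case — is where the real work lies: one must show strict convexity (or a strict rearrangement inequality) in the joint length-and-center variables, and then verify that the equally spaced configuration with equal lengths is exactly the critical point, ruling out all other configurations except through the stated degeneracy.
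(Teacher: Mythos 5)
Your proposal is a plan rather than a proof, and its central step is both unproven and, as written, incorrect. You ``expect'' the identity
$$
-\frac12 \iint_{E\times E} |x-y|\,dx\,dy = -\int_\R \bigl( u(x) - c \bigr)^2\,dx + (\text{boundary/constant terms}),
$$
but for $u(x)=\int_{-\infty}^x \1_E(t)\,dt$ the function $u-c$ tends to $-c$ at $-\infty$ and to $|E|-c\neq -c$ at $+\infty$, so $\int_\R(u-c)^2\,dx$ diverges for every $c$; no bookkeeping of constants fixes this. The Green's-function reformulation only applies to \emph{neutral} charges: if $f$ has compact support and $\int f=0$, then $-\frac12\iint f(x)|x-y|f(y)\,dx\,dy = +\int_\R F^2\,dx$ with $F(x)=\int_{-\infty}^x f$ (note the sign is opposite to yours). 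To use this here you would first have to neutralize, e.g.\ set $f=\1_E-\rho\1_I$ with $I=[-|E|/(2\rho),|E|/(2\rho)]$, expand the cross terms (handling the possibility $E\not\subset I$), and you would find $-\frac12\iint_{E\times E}|x-y|+\rho\int_E x^2 \geq \int_\R F^2 - \frac{1}{12\rho}|E|^3$. That recovers the bound with $N=\infty$, but the entire content of the proposition --- the improvement $\frac{(1-\rho)^2}{12\rho N^2}|E|^3$ coming from the constraint that $E$ has at most $N$ components --- is nowhere derived in your proposal. In this route it would require proving that each component of $E$ of length $q_n$ forces a ``tooth'' of $F$ contributing at least $\frac{(1-\rho)^2}{12\rho}q_n^3$ to $\int F^2$, followed by H\"older's inequality $\sum_n q_n^3\geq N^{-2}(\sum_n q_n)^3$, together with a matching equality analysis; none of this is sketched, and the equality cases are deferred to unspecified ``convexity.'' Your treatment of part (2) (``no benefit to splitting'') is likewise an assertion, not an argument.

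For comparison, the paper's proof avoids the Green's-function reformulation entirely and is purely finite-dimensional: writing $E$ as intervals with ordered centers $x_1<\cdots<x_N$ and lengths $q_n$, one computes the energy exactly, completes the square in the centers (the ordering makes $-\sum_{n<m}q_nq_m|x_n-x_m|$ a quadratic polynomial), and uses the algebraic identity
$$
\sum_n q_n \Bigl( \sum_{m<n} q_m - \sum_{m>n} q_m \Bigr)^2 + \frac13 \sum_n q_n^3 = \frac13 \Bigl( \sum_n q_n \Bigr)^3
$$
to arrive at an exact formula for the energy as (nonnegative square) $+ \frac{(1-\rho)^2}{12\rho}\sum_n q_n^3 - \frac{1}{12\rho}(\sum_n q_n)^3$; H\"older's inequality then gives the bound, and the equality cases (equally spaced centers, equal lengths, consistency of the configuration using $\rho\leq1$) are read off directly. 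Part (2) follows from the $\rho=1$ case plus a rearrangement inequality for $\int_E x^2\,dx$. Until you establish a correct substitute for your claimed identity and, above all, exhibit the mechanism producing the factor $1-\frac{(1-\rho)^2}{N^2}$, your proposal does not prove the proposition.
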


\begin{proof}
We will prove the assertion of part (1), but with the case $\rho=1$ included. Before doing so, let us observe that this will also imply the statement for $\rho>1$. Indeed, once the $\rho=1$ statement is proved, we know that
$$
-\frac12 \iint_{E\times E} |x-y|\,dx\,dy + \int_E x^2\,dx \geq - \frac{1}{12}|E|^3
$$
with equality if and only if $E$ is an interval centered at the origin. On the other hand, by a simple rearrangement inequality we know that for $\rho>1$
$$
(\rho - 1) \int_E x^2 \,dx \geq (\rho - 1) \int_{-|E|/2}^{|E|/2} x^2 \,dx = \frac{\rho - 1}{12} |E|^3
$$
with equality if and only if $E$ is an interval centered at the origin. This implies the claimed statement for $\rho>1$.

Thus, in the following we will assume that $\rho\in(0,1]$. We denote by $x_1<\ldots<x_N$ the centers of the intervals and by $q_1,\ldots,q_N$ their length. (If there are less than $N$ intervals, we set some of the $q_n$'s equal to zero.) We will show that
\begin{align}\label{eq:formula}
-\frac12 \iint_{E\times E} |x-y|\,dx\,dy + \rho \int_E x^2\,dx
& = \rho \sum_n q_n \left(x_n - \frac{1}{2\rho} \left( \sum_{m<n} q_m - \sum_{m>n} q_m \right) \right)^2 \notag \\
& \qquad + \frac{(1-\rho)^2}{12\rho} \sum_n q_n^3 
- \frac{1}{12\rho}  \left( \sum_n q_n \right)^3 \,.
\end{align}
Dropping the first term on the right side, which is non-negative, and bounding using H\"older's inequality
$$
\sum_n q_n^3 \geq N^{-2} \left( \sum_n q_n \right)^3 \,,
$$
we obtain from \eqref{eq:formula} the lower bound in the proposition. Moreover, the non-negative term that we dropped vanishes if and only if
$$
x_n = \frac{1}{2\rho} \left( \sum_{m<n} q_m - \sum_{m>n} q_m \right)
\qquad\text{for all}\ n=1,\ldots,N \,.
$$
Note that this minimizing configuration is consistent with coming from centers of intervals, since (recalling that $\rho\leq 1$)
$$
x_{n+1} - x_n = \frac{1}{2\rho} (q_n + q_{n+1}) \geq \frac{1}{2} (q_n+q_{n+1}) \,,
$$
so $x_n + q_n/2 \leq x_{n+1}- q_{n+1}/2$. Finally, in H\"older's inequality equality holds if and only if $q_n = N^{-1} \sum_m q_m$ for all $n$. From this we deduce the conditions for equality in the proposition.

It remains to prove identity \eqref{eq:formula}. By a straightforward computation of integrals we find
$$
-\frac12 \iint_{E\times E} |x-y|\,dx\,dy + \rho \int_E x^2\,dx
= - \sum_{n<m} q_n q_m |x_n-x_m| - \frac{2-\rho}{12} \sum_n q_n^3 + \rho \sum_n q_n x_n^2 \,.
$$
Recalling that the $x_n$ are ordered, we can complete the square and obtain
\begin{align*}
- \sum_{n<m} q_n q_m |x_n-x_m| + \rho \sum_n q_n x_n^2
& = \rho \sum_n q_n \left(x_n - \frac{1}{2\rho} \left( \sum_{m<n} q_m - \sum_{m>n} q_m \right) \right)^2 \\
& \qquad - \frac{1}{4\rho} \sum_n q_n \left( \sum_{m<n} q_m - \sum_{m>n} q_m \right)^2 \,.
\end{align*}
We now observe that
$$
\sum_n q_n \left( \sum_{m<n} q_m - \sum_{m>n} q_m \right)^2 + \frac13 \sum_n q_n^3 = \frac13 \left( \sum_n q_n \right)^3 \,.
$$
This can be proved by induction, for instance. Combining the last two identities we obtain \eqref{eq:formula}.
\end{proof}

\begin{corollary}\label{background}
Let $\rho\in(0,1)$, $L>0$ and $N\in\N$. For any set $E\subset [-L/2,L/2]$ which is the union of at most $N$ intervals one has
\begin{align*}
& -\frac12 \int_{-L/2}^{L/2} \int_{-L/2}^{L/2} (\1_E(x)-\rho)|x-y|(\1_E(y)-\rho)\,dx\,dy \\
& \qquad \geq -\frac{1}{12\rho} |E|^3 \left( 1- \frac{(1-\rho)^2}{N^2} \right) + \frac14 \rho |E| L^2- \frac{1}{6} \rho^2 L^3 \,.
\end{align*}
Moreover, if $(N-1+\rho)|E| \leq \rho N L$, then equality holds if and only if $E$ is the union of exactly $N$ intervals, centered at the points $\frac{(2n-N-1)|E|}{2\rho N}$, $n=1,\ldots,N$, and all of equal length.
\end{corollary}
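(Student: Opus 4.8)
The plan is to reduce the statement to Proposition \ref{bound}(1) by expanding the quadratic form and computing explicitly the terms involving the background density $\rho$. Writing $\1_E-\rho$ and multiplying out, the integrand $(\1_E(x)-\rho)|x-y|(\1_E(y)-\rho)$ splits into four pieces. The genuinely quadratic piece in $\1_E$ reproduces $-\frac12\iint_{E\times E}|x-y|\,dx\,dy$; the two cross terms combine, by symmetry of $|x-y|$, into $\rho\int_E\bigl(\int_{-L/2}^{L/2}|x-y|\,dy\bigr)dx$; and the pure background term gives $-\frac{\rho^2}{2}\iint_{[-L/2,L/2]^2}|x-y|\,dx\,dy$.

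The one computation that makes everything work is the elementary identity
$$
\int_{-L/2}^{L/2}|x-y|\,dy = x^2 + \frac{L^2}{4}\qquad\text{for }x\in[-L/2,L/2],
$$
obtained by splitting the inner integral at $y=x$. Inserting it, the cross term becomes exactly $\rho\int_E x^2\,dx + \frac{\rho L^2}{4}|E|$, and integrating once more gives $\iint_{[-L/2,L/2]^2}|x-y|\,dx\,dy = L^3/3$, so the background term equals $-\frac{\rho^2 L^3}{6}$. Hence
$$
-\frac12\iint_{[-L/2,L/2]^2}(\1_E-\rho)|x-y|(\1_E-\rho)\,dx\,dy = \left[-\frac12\iint_{E\times E}|x-y|\,dx\,dy + \rho\int_E x^2\,dx\right] + \frac{\rho L^2}{4}|E| - \frac{\rho^2 L^3}{6},
$$
and the bracket is precisely the quantity bounded below in Proposition \ref{bound}(1). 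The crucial point, which is not an accident, is that the background interaction generates exactly the harmonic term $\rho\int_E x^2\,dx$ appearing in the Proposition; applying that lower bound immediately yields the asserted inequality.

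For the equality discussion I would note that the only inequality invoked is Proposition \ref{bound}(1), so equality forces $E$ to be the union of exactly $N$ equal intervals centered at $\frac{(2n-N-1)|E|}{2\rho N}$, $n=1,\ldots,N$. The remaining task is to confirm that this optimal configuration is admissible, that is, contained in $[-L/2,L/2]$. The rightmost endpoint is $\frac{(N-1)|E|}{2\rho N} + \frac{|E|}{2N} = \frac{(N-1+\rho)|E|}{2\rho N}$, and requiring this to be at most $L/2$ is exactly the hypothesis $(N-1+\rho)|E|\le\rho N L$; by symmetry the left endpoint then lies above $-L/2$. I do not anticipate a serious obstacle: the content is entirely in the clean integral identity above, and the only care needed is to verify that the fit-in-the-interval condition coincides with the stated inequality rather than being strictly stronger or weaker.
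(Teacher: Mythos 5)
Your proposal is correct and follows essentially the same route as the paper: expand the quadratic form to obtain the identity reducing the left side to $-\frac12\iint_{E\times E}|x-y|\,dx\,dy+\rho\int_E x^2\,dx+\frac14\rho|E|L^2-\frac16\rho^2L^3$, apply Proposition \ref{bound}(1), and observe that the equality configuration fits inside $[-L/2,L/2]$ precisely when $(N-1+\rho)|E|\le\rho NL$. The only difference is that you spell out the inner-integral computation $\int_{-L/2}^{L/2}|x-y|\,dy=x^2+L^2/4$, which the paper leaves implicit.
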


\begin{proof}
Since $E\subset[-L/2,L/2]$ we have
\begin{align*}
& -\frac12 \int_{-L/2}^{L/2} \int_{-L/2}^{L/2} (\1_E(x)-\rho)|x-y|(\1_E(y)-\rho)\,dx\,dy\\
& \qquad = -\frac12 \iint_{E\times E} |x-y|\,dx\,dy + \rho \int_E x^2\,dx + \frac{1}{4} \rho |E| L^2 - \frac1{6} \rho^2 L^3 \,.
\end{align*}
The claimed inequality now follows from the proposition. Moreover, the equality conditions in the proposition are consistent with the constraint $E\subset[-L/2,L/2]$ if and only if $(N-1)|E|/(2\rho N) + |E|/(2N) \leq L/2$.
\end{proof}

Now we are in position to prove our main result.

\begin{proof}[Proof of Theorem \ref{minimization}]
Sets of finite perimeter in $\R$ are finite unions of intervals. Therefore we can compute the infimum over all sets $E$ of finite perimeter with $|E|=\rho L$ by first minimizing over all set $E$ with $|E|=\rho L$ which are the union of exactly $N$ intervals and then taking the infimum over $N$. If we insert $|E|=\rho L$ into the bound in Corollary \ref{background}, we obtain for any set $E\subset[-L/2,L/2]$ with $|E|=\rho L$ which is the union of $N$ intervals,
$$
-\frac12 \int_{-L/2}^{L/2} \int_{-L/2}^{L/2} (\1_E(x)-\rho)|x-y|(\1_E(y)-\rho)\,dx\,dy \geq \frac{1}{12} L^3 \frac{\rho^2(1-\rho)^2}{N^2} \,.
$$
Moreover for such $E$, $\per E=2N$. This yields the claimed lower bound. This lower bound is, in fact, optimal since in the case $|E|=\rho L$ the condition in Corollary \ref{background} is satisfied and therefore the bound is attained by the set described in the corollary.
\end{proof}



\section{Different boundary conditions}\label{sec:boundaryconditions}

Our goal in this section is to prove Theorem \ref{minimizationgen}. The main ingredient in the proof is the following analogue of Proposition \ref{bound} where translation invariance is restored.

\begin{proposition}\label{boundgen}
Let $\rho\in(0,1]$ and $N\in\N$. For any set $E\subset\R$ which is the union of at most $N$ intervals one has
$$
-\frac12 \iint_{E\times E} |x-y|\,dx\,dy + \rho \int_E x^2\,dx - \frac{\rho}{|E|} \left( \int_E x\,dx \right)^2
\geq - \frac{1}{12\rho} |E|^3 \left( 1- \frac{(1-\rho)^2}{N^2} \right) \,.
$$
Equality holds if and only if $E$ is the union of exactly $N$ intervals, centered at the points $\frac{(2n-N-1)|E|}{2\rho N}+X$, $n=1,\ldots,N$, for some $X\in\R$ and all of equal length.
\end{proposition}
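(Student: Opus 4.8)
The plan is to deduce Proposition \ref{boundgen} from Proposition \ref{bound}(1) by restoring translation invariance through a shift to the center of mass. Writing $c := |E|^{-1}\int_E x\,dx$, I would first record the algebraic identity
\[
\rho\int_E x^2\,dx - \frac{\rho}{|E|}\left(\int_E x\,dx\right)^2 = \rho\int_E (x-c)^2\,dx = \rho\int_{E-c} x^2\,dx,
\]
which follows by expanding $(x-c)^2$ and using the definition of $c$. Since $\iint_{E\times E}|x-y|\,dx\,dy$ is itself invariant under translations, this shows that the whole left side of the proposition is unchanged when $E$ is replaced by any translate, and in particular equals the corresponding expression for $E' := E-c$.

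Next I would apply Proposition \ref{bound}(1) to $E'$, which is again a union of at most $N$ intervals with $|E'|=|E|$ and, by construction, $\int_{E'} x\,dx = 0$, so that its extra term vanishes. This gives
\[
-\frac12\iint_{E'\times E'}|x-y|\,dx\,dy + \rho\int_{E'}x^2\,dx \geq -\frac{1}{12\rho}|E|^3\left(1-\frac{(1-\rho)^2}{N^2}\right),
\]
and by the translation invariance just noted the left side coincides with the left side of Proposition \ref{boundgen}. This proves the inequality.

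For the equality statement I would transport the equality conditions of Proposition \ref{bound}(1) through the shift. Equality for $E'$ holds exactly when $E'$ is a union of $N$ equal intervals centered at $\frac{(2n-N-1)|E|}{2\rho N}$, $n=1,\ldots,N$; translating back by $c$ turns these centers into $\frac{(2n-N-1)|E|}{2\rho N}+c$, which is the claimed form with $X=c$. Conversely, for any set of the stated form the centers are symmetric about $X$ and carry equal weight, so its center of mass is exactly $X$; hence $E-X$ is an equality case for Proposition \ref{bound}(1), giving the reverse implication.

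I do not expect a real obstacle: the entire point is that subtracting $\frac{\rho}{|E|}(\int_E x\,dx)^2$ replaces the second moment $\int_E x^2\,dx$ by the centered second moment $\int_E (x-c)^2\,dx$, after which Proposition \ref{bound}(1) applies unchanged. The only step needing a moment's care is checking that the extremal configuration of Proposition \ref{bound}(1) really has vanishing first moment --- which it does, since the centers $\frac{(2n-N-1)|E|}{2\rho N}$ sum to zero --- so that the free translation parameter $X$ in the equality case is consistent with the center-of-mass normalization. An alternative would be to rerun the computation leading to \eqref{eq:formula} with the extra term present, but the reduction above is shorter and reuses the work already done.
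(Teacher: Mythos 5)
Your proposal is correct and is essentially identical to the paper's own proof: both define the center of mass $X=|E|^{-1}\int_E x\,dx$, use the identity rewriting the subtracted term as the centered second moment $\rho\int_E(x-X)^2\,dx$, invoke translation invariance of the double integral, and then apply Proposition \ref{bound} to $E'=E-X$, transporting its equality conditions back. The only difference is that you spell out the two-directional check of the equality case (in particular that a set of the claimed form has center of mass exactly $X$), which the paper compresses into the single word ``clearly.''
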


\begin{proof}
Let $X:= |E|^{-1} \int_E x\,dx$ and $E'=E-X$. Then
\begin{align*}
& -\frac12 \iint_{E\times E} |x-y|\,dx\,dy + \rho \int_E x^2\,dx - \frac{\rho}{|E|} \left( \int_E x\,dx \right)^2 \\
& \qquad = -\frac12 \iint_{E\times E} |x-y|\,dx\,dy + \rho \int_E (x-X)^2\,dx \\
& \qquad =-\frac12 \iint_{E'\times E'} |x-y|\,dx\,dy + \rho \int_{E'} x^2\,dx \,.
\end{align*}
Since $|E'|=|E|$ and since $E'$ is also the union of at most $N$ intervals, the claimed lower bound now follows immediately from Proposition \ref{bound}.

Moreover, also by that proposition, equality holds if and only if $E'$ is the union of exactly $N$ intervals, centered at the points $\frac{(2n-N-1)|E|}{2\rho N}$, $n=1,\ldots,N$, and all of equal length. Clearly this is equivalent to the statement in the proposition.
\end{proof}

\begin{proof}[Proof of Theorem \ref{minimizationgen}]
In the Neumann case we have $\tilde{\mathcal I}^{(L)}_\rho[E] = \mathcal I^{(L)}_\rho[E]$ provided $|E|=\rho L$, so the assertion follows immediately from Theorem \ref{minimization}.

In the Dirichlet and the periodic case we have for $|E|=\rho L$ that
$$
\tilde{\mathcal I}^{(L)}_\rho[E] = \mathcal I^{(L)}_\rho[E] - \frac1L \left( \int_{-L/2}^{L/2} x(\1_E(x)-\rho)\,dx \right)^2 = \mathcal I^{(L)}_\rho[E] - \frac{\rho}{|E|} \left( \int_E x\,dx \right)^2 \,.
$$
Arguing as in the proof of Corollary \ref{background}, with Proposition \ref{boundgen} instead of Proposition \ref{bound}, we obtain the assertion.
\end{proof}


\section{The thermodynamic limit}\label{sec:tdlimit}

With the exact formula from Theorem \ref{minimization} at hand it is easy to compute the thermodynamic limit with optimal remainder estimates.

\begin{proof}[Proof of Corollary \ref{tdlim}]
We use the explicit expression for the infimum from Theorem \ref{minimization} and write
$$
2(N/L) + \frac{\gamma}{12} \frac{\rho^2(1-\rho)^2}{(N/L)^2} = \gamma^{1/3} \rho^{2/3} (1-\rho)^{2/3} \ f\left( \frac{2N}{L\gamma^{1/3} \rho^{2/3} (1-\rho)^{2/3}}\right)
$$
with
$$
f(x) = x + \frac{1}{3x^2} \,.
$$
The function $f$ has a unique minimum at $x=(2/3)^{1/3}$ with $f((2/3)^{1/3}) = (3/2)^{2/3}$.
\end{proof}

\begin{remark}
Since $f(x)\geq (3/2)^{2/3}$ for all $x$, the preceding proof yields the uniform bound \eqref{eq:tduniform}. Moreover, along the sequence $(L_N)_{N\in\N}$ defined by $2N/(L_N\gamma^{1/3} \rho^{2/3}(1-\rho)^{2/3}) = (2/3)^{1/3}$ we have
$$
e_\rho^{(L)}(\rho L_N) - L_N \left( \frac{3}{2} \right)^{2/3} \gamma^{1/3}
 \rho^{2/3}(1-\rho)^{2/3} = 0 \,.
$$
We now prove the remainder bound \eqref{eq:tdrem} and show its optimality. Given $L>0$ choose $N$ such that
$$
\frac{2N}{L\gamma^{1/3} \rho^{2/3} (1-\rho)^{2/3}} \leq (2/3)^{1/3} < \frac{2(N+1)}{L\gamma^{1/3} \rho^{2/3} (1-\rho)^{2/3}}
$$
and define
$$
\delta_- = (2/3)^{1/3} - \frac{2N}{L\gamma^{1/3} \rho^{2/3} (1-\rho)^{2/3}} \,,
\qquad
\delta_+ = \frac{2(N+1)}{L\gamma^{1/3} \rho^{2/3} (1-\rho)^{2/3}} -  (2/3)^{1/3} \,.
$$
Then
$$
0\leq \delta_\pm \leq \frac{2}{L\gamma^{1/3} \rho^{2/3} (1-\rho)^{2/3}} \,.
$$
Since  the function $f$ introduced in the previous proof has a unique local minimum,
\begin{align*}
L^{-1} e_\rho^{(L)}(\rho L) & = \gamma^{1/3} \rho^{2/3} (1-\rho)^{2/3} \min\left\{ f\left( \frac{2N}{L\gamma^{1/3} \rho^{2/3} (1-\rho)^{2/3}} \right), f\left( \frac{2(N+1)}{L\gamma^{1/3} \rho^{2/3} (1-\rho)^{2/3}} \right) \right\} \\
& = \gamma^{1/3} \rho^{2/3} (1-\rho)^{2/3} \min\left\{ f\left( (2/3)^{1/3}-\delta_- \right), f\left( (2/3)^{1/3} + \delta_+ \right) \right\}.
\end{align*}
Since
$$
f(x) = (3/2)^{2/3} + c (x- (2/3)^{1/3})^2 + o((x-(2/3)^{1/3})^2)
\qquad\text{as}\ x\to (2/3)^{1/3}
$$
with $c=(3/2)^{4/3}$, we conclude that
$$
L^{-1} e_\rho^{(L)}(\rho L) = \gamma^{1/3} \rho^{2/3} (1-\rho)^{2/3} \left( (3/2)^{2/3} + c \min\{ \delta_+^2 +o(\delta_+^2), \delta_-^2 + o(\delta_-^2)\} \right).
$$
Clearly,
$$
\limsup_{L\to\infty} L \min\{\delta_+,\delta_-\} = \frac{1}{\gamma^{1/3} \rho^{2/3}(1-\rho)^{2/3}}
$$
and therefore
$$
\limsup_{L\to\infty} L^2 \left( L^{-1} e_\rho^{(L)}(\rho L) - (3/2)^{2/3}\gamma^{1/3} \rho^{2/3} (1-\rho)^{2/3} \right) = \frac{c}{\gamma^{1/3} \rho^{2/3}(1-\rho)^{2/3}} \,.
$$
This proves the claimed optimal error bound.
\end{remark}

Finally, we discuss the problem with an excess charge.

\begin{proof}[Proof of Corollary \ref{excess}]
We infer from Corollary \ref{background} that for any set $E\subset[-L/2,L/2]$ with $|E|=\rho L +Q$ which consists of at most $N$ intervals we have the lower bound
\begin{align*}
\mathcal I^{(L)}_\rho[E] & \geq 2N -\frac{\gamma}{12\rho} (\rho L +Q)^3 \left( 1- \frac{(1-\rho)^2}{N^2} \right) + \frac\gamma 4 \rho (\rho L+Q) L^2- \frac{\gamma}{6} \rho^2 L^3 \\
& = 2N + \frac{\gamma}{12\rho} (\rho L +Q)^3 \frac{(1-\rho)^2}{N^2} -\frac{\gamma}{12\rho} (3 \rho L Q^2 + Q^3) \,.
\end{align*}
Therefore,
\begin{align}\label{eq:excesslower}
e^{(L)}_\rho(\rho L+Q) \geq \min_{N\in\N} \left( 2N + \frac{\gamma}{12\rho} (\rho L +Q)^3 \frac{(1-\rho)^2}{N^2}\right) -\frac{\gamma}{12\rho} (3 \rho L Q^2 + Q^3).
\end{align}
Clearly,
$$
\lim_{L\to\infty} L^{-1} \frac{\gamma}{12\rho} (3 \rho L Q^2 + Q^3) = \frac{\gamma}{4} Q^2 \,,
$$
which gives the claimed contribution to the energy due to the excess charge. Moreover, elementary analysis shows that
\begin{equation}
\label{eq:elem}
\lim_{L\to\infty} L^{-1} \min_{N\in\N} \left( 2N + \frac{\gamma}{12\rho} (\rho L +Q)^3 \frac{(1-\rho)^2}{N^2}\right) = \left( \frac{3}{2} \right)^{2/3} \gamma^{1/3} \rho^{2/3}(1-\rho)^{2/3} \,.
\end{equation}
This yields the claimed asymptotic lower bound
$$
\liminf_{L\to\infty} L^{-1} e^{(L)}_\rho(\rho L+Q) \geq \left( \frac{3}{2} \right)^{2/3} \gamma^{1/3} \rho^{2/3}(1-\rho)^{2/3} - \frac{\gamma}{4} Q^2 \,.
$$

In order to prove an asymptotic upper bound we first assume
\begin{equation}
\label{eq:assq}
Q<2^{2/3} 3^{1/3}\gamma^{-1/3} \rho^{1/3} (1-\rho)^{1/3} \,.
\end{equation}
In fact, under this assumption we will be able to solve the $e^{(L)}_\rho(\rho L + Q)$ problem explicitly for $L$ large enough. To do so, we note that the elementary analysis leading to \eqref{eq:elem} shows also that the minimum on the right side is attained by some $N$ satisfying
$$
N = 2^{-2/3} 3^{-1/3} \gamma^{1/3} \rho^{2/3} (1-\rho)^{2/3} L +o(L) \,. 
$$
Therefore, for $L$ large enough we can restrict the minimum in \eqref{eq:excesslower} to such $N$, and then assumption \eqref{eq:assq} implies that the inequality
$$
(N-1+\rho) Q \leq \rho(1-\rho)L
$$
holds for all considered $N$. Using the latter inequality, we infer from the second part of Corollary \ref{background} that the above lower bound on $\mathcal I^{(L)}_\rho[E]$ can be saturated and therefore we infer that equality holds in \eqref{eq:excesslower} for all sufficiently large $L$. This proves the claimed asymptotic upper bound under the assumption \eqref{eq:assq}.

It remains to deal with $Q$ for which \eqref{eq:assq} does not hold. In fact, we give a proof that works for all $Q>0$ by reducing it to the case $Q<0$ (and $\rho$ to $1-\rho$). This proof, however, does not yield the optimal set. We start by observing
$$
\mathcal I^{(L)}_\rho[E] = \mathcal I^{(L)}_{1-\rho}[(-L/2,L/2)\setminus E] + \left( \per E - \per \left( (-L/2,L/2)\setminus E \right) \right).
$$
Since
$$
\left| \per E - \per \left( (-L/2,L/2)\setminus E \right) \right| \leq 2 \,,
$$
we conclude that for all $\ell>0$,
$$
\left| e^{(L)}_\rho(\ell) - e^{(L)}_{1-\rho}(L-\ell) \right| \leq 2 \,.
$$
In particular, because of what we have already shown in the first part of the proof (noting that this formula is invariant under changing $\rho$ to $-\rho$ and $Q$ to $-Q$),
\begin{align*}
L^{-1} e^{(L)}_\rho(\rho L+Q) & = L^{-1} e^{(L)}_{1-\rho}((1-\rho)L-Q) + O(L^{-1}) \\
& = \left( \frac{3}{2} \right)^{2/3} \gamma^{1/3} \rho^{2/3}(1-\rho)^{2/3} - \frac{\gamma}{4} Q^2 + o(1) \,.
\end{align*}
This proves the claimed asymptotic upper bound.
\end{proof}

\bibliographystyle{amsalpha}

\end{document}